\NewDocumentCommand{\ceil}{s O{} m}{%
  \IfBooleanTF{#1} 
    {\left\lceil#3\right\rceil} 
    {#2\lceil#3#2\rceil} 
}
\newcommand{\Z}{{\mathbb Z}}
\newcommand{\C}{{\mathbb C}}
\newcommand{\CC}{{\cal C}}
\newcommand{\NN}{{\cal N}}
\newtheorem{definition}{Definition}
\newtheorem{proposition}{Proposition}
\newtheorem{theorem}{Theorem}
\newtheorem{corollary}{Corollary}
\def\P{|\Psi\rangle}
\DeclareMathOperator{\Tr}{Tr}
\def\floor#1{\left\lfloor #1\right\rfloor}
\title{On the Existence of Absolutely Maximally Entangled States of Minimal Support}
\author{Antonio Bernal%
  \thanks{Electronic address: \texttt{abernal@ub.edu}\\
Supported by project FIS2013-41757-P  }}
\affil{Department de Matemàtiques i Informàtica. Universitat de Barcelona}
\begin{document}
\maketitle

\begin{abstract}
In this paper we prove that no absolutely maximally entangled, AME, state with minimal support exists with 7 sites and 5 levels.

General AME states are pure multipartite states that, when reduced to half or less of the sites, the maximum entropy mixed state is obtained. They have found applications in teleportation and quantum secret sharing, and finding conditions for their existence is a well known open problem. We consider the version of this problem for minimally supported AME states. We single out known both sufficient and necessary conditions in that case. From our negative result, we show that the necessary condition is not sufficient. The proof uses a recent result on the theory of general, nonlinear, classical codes.
\end{abstract}

\begin{multicols}{2}

\section{Introduction}

In this paper we consider pure states of $n$ qudrits, $|\Psi\rangle\in (\C^d)^{\otimes n}$, such that, when tracing out half or more of the sites, the mixed state of maximum confusion is obtained. Those states have been called absolutely maximally entangled, AME, or $AME(n,d)$, in \cite{qss} in the context of quantum secret sharing schemes. The same concept had already appeared in \cite{scott} in the context of quantum error correcting codes, under the term \lq\lq$\floor{n/2}$-uniform''.

AME states have found applications in fields like teleportation or quantum secret sharing, and provide links between different areas of mathematics, like coding theory, orthogonal arrays, quantum error correcting codes or combinatorial designs, see \cite{amecombinatorial}, \cite{qss} and \cite{ameexistenceandapplications}.

A well known open problem is to determine conditions for the existence of AME states. This paper deals with the problem of existence of AME states that are supported on a minimal set of kets from the computational basis.

For AME states of minimal support, a necessary condition is that $d\ge\ceil{n/2}+1$ if $n\ge4$ and $d$ is any integer  \cite{amecombinatorial}, and a sufficient condition is that $d\ge n-1$, when $d$ is a prime power, \cite{ameorthogonal, amecombinatorial, ameexistenceandapplications}.

We prove that there is no $AME(7,5)$ state with minimal support. The result is proved using the standard theory of linear codes, along with a recent result that relates linear and nonlinear codes, see \cite{kokkalaetal}. Since the case where $n=7$ and $d=5$ is not forbidden by the above necessary condition, we see  that the condition is not sufficient.

The organization of the paper is a follows. Sections \ref{amegeneral} and \ref{linearcodes} are devoted to review the general definitions and both necessary and sufficient conditions. In section \ref{negativeexample}, it is proved that no $AME(7,5)$ states of minimal support exist. Section \ref{conclusions} contains concluding remarks and some open questions.

\section{Absolutely maximally entangled states}\label{amegeneral}
Let $n$ and $d$ be integers $n,d\ge 2$. Let $\P$ be a pure multipartite state on $n$ sites, where the local Hilbert space is $d$-dimensional. That is, $\P\in (\C^d)^{\otimes n}$. 
\begin{definition}
We say that $\P$ is absolutely maximally entangled with $n$ sites and local dimension $d$, $AME(n,d)$, if for any partition of $\{1,\ldots,n\}$ into two disjoint subsets $A$ and $B$, with $|B|=m\le|A|=n-m$, the density obtained from $\P\langle\Psi|$ tracing out the sites on the entries in $A$ is multiple of the identity,
\[
\Tr_A\P\langle\Psi| = \frac{1}{d^{m}}Id_{\C^{\otimes m}}.
\]
\end{definition}

If $V$ is a vector space $v\in V$ and ${\cal B}\subset V$ is a basis of $V$, the support of $v$ in the basis $\cal B$ is the number of nonzero coordinates of $v$ in the basis $\cal B$.

A linear algebra argument shows that any $AME(n,d)$ state has support on the computational basis of at least $d^{\floor{n/2}}$.

\begin{definition}
Given two integers $n$, $d$, with $n,d\ge 2$, we will say that an $AME(n,d)$ state $\P$ is of minimal support if the support of $\P$ in the computational basis is $d^{\floor{n/2}}$.
\end{definition}

There is a characterization of $AME(n,d)$ states of minimal support in terms of classical codes.

We consider the set $\Z_d=\{0,\ldots,d-1\}$. A code over the alphabet $\Z_d$ of wordlength $n$ is a subset $\CC\subset\Z_d^n$. On $\CC$ we consider the Hamming distance. Given two words $w,w'\in\CC$, the Hamming distance between $w$ and $w'$, $D_H(w,w')$ is the number of coordinates on which the words $w$ and $w'$ differ. The minimum distance $\delta$ of the code $\CC$ is the minimum of the distances $D_H(w,w')$ between different words $w,w'\in\CC$. The well known Singleton bound establishes that $|\CC|\le d^{n-\delta + 1}$. A code is called maximum distance separable, MDS, if the singleton bound is an equality. See \cite{roth} for general properties of codes.

\begin{theorem}[\cite{amecombinatorial, ameexistenceandapplications}]\label{mdsequivalence}
The existence of $AME(n,d)$ of minimal support is equivalent to the existence of MDS codes of wordlength $n$, alphabet size $d$ and minimum distance $\ceil{n/2}+1$. The words of the code and the kets of the state are in one onto one correspondence.
\end{theorem}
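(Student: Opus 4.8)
The plan is to establish the equivalence in Theorem~\ref{mdsequivalence} by making explicit the correspondence between the kets appearing in the support of a minimal-support $AME(n,d)$ state and the codewords of a code $\CC\subset\Z_d^n$, and then showing that the $AME$ property translates precisely into the MDS property. First I would write a general pure state as $\P=\sum_{\vec\imath\in\Z_d^n}c_{\vec\imath}\,|\vec\imath\rangle$ and define $\CC=\{\vec\imath:c_{\vec\imath}\neq0\}$ to be its support. Since we already know any $AME(n,d)$ state has support at least $d^{\floor{n/2}}$, the minimal-support hypothesis forces $|\CC|=d^{\floor{n/2}}$ exactly, which pins down the cardinality of the code; the content of the theorem is then to identify which supports of this size actually carry an $AME$ state and to match that against the MDS condition.

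The heart of the argument is a local analysis of the reduced density matrices. For a partition $A\sqcup B$ with $|B|=m\le\floor{n/2}$, I would compute $\Tr_A\P\langle\Psi|$ in the computational basis: its matrix entries are indexed by pairs of strings on the $B$-sites, and the $(\vec u,\vec v)$ entry is $\sum_{\vec w}c_{(\vec u,\vec w)}\overline{c_{(\vec v,\vec w)}}$, where $\vec w$ ranges over the $A$-sites. The $AME$ condition that this equal $d^{-m}Id$ means two things simultaneously: the off-diagonal entries vanish, and each diagonal entry equals $d^{-m}$. The key step is to show that these conditions hold for every such partition if and only if, for every choice of $m=\ceil{n/2}$ coordinate positions, the projection of $\CC$ onto those positions is a bijection onto $\Z_d^{m}$ — equivalently, that fixing any $n-\ceil{n/2}=\floor{n/2}$ coordinates leaves at most one codeword, so distinct codewords agree in at most $\floor{n/2}-1$ positions. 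Restating, any two codewords differ in at least $n-\floor{n/2}+1=\ceil{n/2}+1$ positions, which is exactly the claim that $\CC$ has minimum distance $\delta=\ceil{n/2}+1$. Combined with $|\CC|=d^{\floor{n/2}}=d^{n-\delta+1}$, the Singleton bound is saturated, so $\CC$ is MDS; conversely, starting from an MDS code with these parameters and assigning uniform phases $c_{\vec\imath}=d^{-\floor{n/2}/2}$ recovers an $AME$ state, giving the other direction.

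The subtle point, and the step I expect to be the main obstacle, is handling the off-diagonal (coherence) conditions rather than just the marginal-uniformity conditions. Uniformity of the diagonal is essentially a counting statement about how the support projects onto subsets of coordinates and follows cleanly from the bijectivity just described. The vanishing of off-diagonal terms, however, is where the amplitudes and their phases genuinely enter: one must argue that with the support already constrained to be MDS, the requirement $\sum_{\vec w}c_{(\vec u,\vec w)}\overline{c_{(\vec v,\vec w)}}=0$ for $\vec u\neq\vec v$ is compatible with some choice of coefficients, while in the converse it is automatic. In the minimal-support regime the MDS structure forces each such sum to contain exactly one nonzero term (because, fixing $\vec u$ on the $B$-sites, there is a unique completion to a codeword for each value on enough $A$-sites), so the off-diagonal sums with $\vec u\neq\vec v$ receive contributions from disjoint index sets and the orthogonality is a condition one can satisfy with suitable unit-modulus coefficients; isolating exactly why the supports cannot overlap, and hence why the phase freedom suffices, is the delicate combinatorial bookkeeping that the proof must carry out carefully.

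Finally I would assemble the two directions into the stated equivalence and note that the correspondence is manifestly one-to-one, since the kets in the support are literally indexed by the codewords. I would be careful to keep the roles of $\ceil{n/2}$ and $\floor{n/2}$ straight throughout, as the asymmetry between the two halves of the bipartition (the smaller block $B$ has size $m\le\floor{n/2}$) is what makes the relevant distance $\ceil{n/2}+1$ rather than its counterpart, and getting this bookkeeping right is essential for matching the Singleton bound exactly.
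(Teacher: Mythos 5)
Your overall framework --- identify the support $\CC$ of the state with the code, translate the reduced-density conditions into combinatorics of projections of $\CC$, and close with the Singleton bound --- is exactly the standard route to this theorem (which the paper itself only cites, without reproducing a proof), and your forward direction (AME of minimal support $\Rightarrow$ MDS) is essentially correct. However, the step you yourself single out as the crux, the off-diagonal analysis, is wrong as stated, and wrong in a way that would make the theorem false. You claim that for $\vec u\neq\vec v$ the sum $\sum_{\vec w}c_{(\vec u,\vec w)}\overline{c_{(\vec v,\vec w)}}$ ``contains exactly one nonzero term'' and that its vanishing must then be arranged by ``suitable unit-modulus coefficients.'' A sum consisting of exactly one term of modulus $d^{-\floor{n/2}}$ cannot vanish for \emph{any} choice of phase; if your claim were right, the uniform superposition over an MDS code would never be AME and the converse direction would collapse. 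The correct statement is that each off-diagonal sum contains \emph{no} nonzero term at all: a term survives only if both $(\vec u,\vec w)$ and $(\vec v,\vec w)$ lie in $\CC$, and these two words agree on all $n-m\ge\ceil{n/2}$ coordinates of $A$, whereas distinct words of a code with minimum distance $\ceil{n/2}+1$ agree on at most $\floor{n/2}-1<\ceil{n/2}$ coordinates. Hence every product term has a zero factor, the off-diagonal entries vanish identically for \emph{any} coefficients supported on the code, and no phase bookkeeping whatsoever is required --- this term-by-term disjointness is precisely the missing idea, and it is what makes the converse direction immediate rather than ``delicate.''

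A secondary slip, ironic given your closing remark about keeping $\ceil{n/2}$ and $\floor{n/2}$ straight: the projections of $\CC$ that are bijections are those onto $\floor{n/2}$ coordinate positions, not $\ceil{n/2}$. A projection onto $\ceil{n/2}$ positions cannot be a bijection onto $\Z_d^{\ceil{n/2}}$ when $n$ is odd, simply because $|\CC|=d^{\floor{n/2}}<d^{\ceil{n/2}}$; such projections are only injective. Your parenthetical ``equivalently'' clause (fixing any $\floor{n/2}$ coordinates leaves at most one codeword) is the correct statement, and your distance computation $\delta\ge n-\floor{n/2}+1=\ceil{n/2}+1$ proceeds from it, so the forward direction survives; but the ``key step'' equivalence as you wrote it is false and should be restated in terms of the $\floor{n/2}$-coordinate projections.
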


The following property follows by a combinatorial argument involving the associated MDS code.

\begin{proposition}[{\cite{amecombinatorial}}]\label{nimpliesnminusone}
 Let $n\ge 3$ be an integer. If there is an $AME(n,d)$ state of minimal support, then there is an $AME(n-1,d)$ state of minimal support.
\end{proposition}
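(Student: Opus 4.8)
The plan is to work entirely on the coding-theoretic side via Theorem \ref{mdsequivalence}: it suffices to show that an MDS code $\CC\subset\Z_d^n$ of minimum distance $\delta=\ceil{n/2}+1$ gives rise to an MDS code of wordlength $n-1$, alphabet size $d$ and minimum distance $\delta'=\ceil{(n-1)/2}+1$. First I would record how $\delta$ and $\delta'$ are related according to the parity of $n$: when $n$ is odd one has $\delta'=\delta-1$, whereas when $n$ is even one has $\delta'=\delta$. This dichotomy dictates which of the two standard code operations to apply. \emph{Puncturing} (deleting a fixed coordinate) lowers both the wordlength and the minimum distance by one, while \emph{shortening} (keeping only the words carrying a fixed symbol in a fixed coordinate and then deleting that coordinate) lowers the wordlength by one but preserves the minimum distance. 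So I would puncture in the odd case and shorten in the even case.

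For the odd case, let $\CC'$ be obtained by deleting a coordinate from $\CC$. Since $\delta\ge 2$, no two words of $\CC$ agree in all but one coordinate, so puncturing is injective and $|\CC'|=|\CC|=d^{n-\delta+1}$. Any two words still differ in at least $\delta-1$ coordinates, so the minimum distance of $\CC'$ is at least $\delta-1=\delta'$. Comparing $|\CC'|$ with the Singleton bound for wordlength $n-1$ forces equality, so $\CC'$ is MDS of wordlength $n-1$ and minimum distance $\delta'$, as required.

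For the even case, fix a coordinate $i$ and a symbol $a\in\Z_d$, keep only the words of $\CC$ with symbol $a$ in position $i$, and delete position $i$ to form $\CC''$. Because all retained words share the symbol $a$ at position $i$, deleting it changes no pairwise distance, so the minimum distance of $\CC''$ is at least $\delta$. The subtle point, and the step I expect to be the main obstacle, is controlling $|\CC''|$, i.e.\ showing that exactly $d^{n-\delta}$ words carry symbol $a$ in position $i$. This is where the MDS structure is essential: the projection of $\CC$ onto any $n-\delta+1$ coordinates is injective (two words agreeing there would differ in at most $\delta-1$ places), hence a bijection onto $\Z_d^{n-\delta+1}$ since $|\CC|=d^{n-\delta+1}$; choosing such a coordinate set containing $i$ shows each symbol occurs in position $i$ exactly $d^{n-\delta}$ times. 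Hence $|\CC''|=d^{n-\delta}$, and comparison with the Singleton bound again yields equality, so $\CC''$ is MDS of wordlength $n-1$ and minimum distance $\delta'=\delta$.

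Finally, in either case Theorem \ref{mdsequivalence} converts the resulting MDS code of wordlength $n-1$, alphabet size $d$ and minimum distance $\ceil{(n-1)/2}+1$ back into an $AME(n-1,d)$ state of minimal support, completing the argument.
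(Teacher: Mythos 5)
Your proof is correct, and it takes exactly the route the paper points to: the paper does not spell out an argument but cites \cite{amecombinatorial} with the remark that the proposition ``follows by a combinatorial argument involving the associated MDS code,'' which is precisely the puncturing (odd $n$) and shortening (even $n$) construction you carry out via Theorem \ref{mdsequivalence}. Both delicate points are handled soundly: the parity dichotomy $\delta'=\delta-1$ versus $\delta'=\delta$, and the count $d^{n-\delta}$ of codewords with a fixed symbol in a fixed coordinate, obtained from the bijectivity of the projection of an MDS code onto any $n-\delta+1$ coordinates.
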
 

So, given $d$, the set of all $n$ such that $AME(n,d)$ states of minimal support exist is an interval.

\begin{corollary}
For any integer $d\ge 2$, there is an integer $\NN(d)$ such that, an $AME(n,d)$ state of minimal support exists if, and only if, $n\le \NN(d)$.
\end{corollary}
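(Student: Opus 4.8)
The plan is to prove the corollary as a direct consequence of Proposition~\ref{nimpliesnminusone} together with the necessary condition $d\ge\ceil{n/2}+1$ mentioned in the introduction. The key observation is that Proposition~\ref{nimpliesnminusone} says exactly that the set
\[
S(d)=\{\,n\ge 2 : \text{an } AME(n,d) \text{ state of minimal support exists}\,\}
\]
is downward closed above $n=2$: whenever $n\in S(d)$ with $n\ge 3$, also $n-1\in S(d)$. Hence $S(d)$ is an ``initial segment'' of the integers $\ge 2$, and defining $\NN(d)=\sup S(d)$ will give the claimed threshold, provided I can show this supremum is finite and that the base case $n=2$ is handled.

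First I would check that $S(d)$ is nonempty, so that $\NN(d)$ is well defined: for $n=2$ one can take, e.g., the maximally entangled state $\frac{1}{\sqrt d}\sum_{i=0}^{d-1}|ii\rangle$, which is $AME(2,d)$ of minimal support (its support is $d=d^{\floor{2/2}}$), so $2\in S(d)$. Next I would argue that $S(d)$ is bounded above. This is where the necessary condition enters: an $AME(n,d)$ state of minimal support can exist only if $d\ge\ceil{n/2}+1$ for $n\ge 4$, which rearranges to $n\le 2(d-1)$. Thus $S(d)\subseteq\{2,\ldots,2(d-1)\}$ is finite, and I may set $\NN(d)=\max S(d)$.

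Finally I would verify the stated equivalence $n\in S(d)\iff n\le\NN(d)$. The forward direction $n\in S(d)\Rightarrow n\le\NN(d)$ is immediate from the definition of $\NN(d)$ as the maximum. For the converse, suppose $2\le n\le\NN(d)$; since $\NN(d)\in S(d)$, repeated application of Proposition~\ref{nimpliesnminusone}, descending one site at a time from $\NN(d)$ down to $n$, shows $n\in S(d)$. This finishes the proof.

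The argument is essentially a formalization of the remark immediately preceding the corollary, so there is no genuine mathematical obstacle here; the only points requiring care are the two facts that make ``interval'' precise, namely nonemptiness (the base case $n=2$) and finiteness of $S(d)$ (from the necessary condition). Both are routine once the right ingredients are identified, and the downward-closure supplied by Proposition~\ref{nimpliesnminusone} does all the real work.
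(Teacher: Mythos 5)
Your proposal is correct and follows essentially the same route as the paper, which states this corollary without proof as an immediate consequence of Proposition~\ref{nimpliesnminusone} (``the set of all $n$ such that $AME(n,d)$ states of minimal support exist is an interval''). The extra care you take over the two boundary facts---nonemptiness of the set via the $n=2$ maximally entangled state, and finiteness via the necessary condition of Theorem~\ref{necessarycondition}, which the paper only invokes afterwards---is precisely what is needed to make the paper's implicit argument rigorous, since downward closure alone would not guarantee that a finite integer $\NN(d)$ exists.
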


We finally mention the necessary condition for the existence of AME states of minimal support:

\begin{theorem}[{\cite{amecombinatorial}}]\label{necessarycondition} 
If $n\ge 4$ and an $AME(n,d)$ state of minimal support exists, then $d\ge\ceil{\frac{n}{2}}+1$.
\end{theorem}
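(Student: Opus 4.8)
The plan is to translate the statement, via Theorem~\ref{mdsequivalence}, into a purely coding-theoretic one and then invoke a classical bound on the length of MDS codes. Suppose an $AME(n,d)$ state of minimal support exists with $n\ge4$. By Theorem~\ref{mdsequivalence} there is an MDS code $\CC\subset\Z_d^n$ of minimum distance $\delta=\ceil{n/2}+1$. Being MDS, its size is $|\CC|=d^{\,n-\delta+1}=d^{\,\floor{n/2}}$; write $k=n-\delta+1=\floor{n/2}$, which is $\ge2$ exactly because $n\ge4$. Since $\delta=n-k+1$, the desired inequality $d\ge\ceil{n/2}+1=\delta$ is equivalent to the length bound
\[
n\le d+k-1 .
\]
So the whole problem reduces to bounding the length of an MDS code of dimension $k$ in terms of $d$ and $k$.

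First I would record the defining projection property of MDS codes. If two codewords of $\CC$ agreed on some fixed set of $k$ coordinates, they would differ on at most $n-k=\delta-1$ coordinates, contradicting the minimum distance $\delta$; hence the restriction of $\CC$ to any $k$ coordinates is injective, and since $|\CC|=d^k$ it is in fact a bijection onto $\Z_d^k$. (Equivalently, $\CC$ is an orthogonal array of strength $k$ and index $1$.) A second standard fact I would establish is that \emph{shortening} preserves the MDS property: fixing the value of one coordinate and deleting it produces, by the bijection above, an MDS code with parameters $(n-1,\,d^{\,k-1},\,\delta)$ over the same alphabet, which again has the projection property, so the construction iterates.

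With these tools the cleanest route is to shorten $\CC$ a total of $k-2$ times, arriving at an MDS code of length $N=n-k+2$, dimension $2$ and the same minimum distance $\delta=N-1$ over $\Z_d$. A two-dimensional MDS code is precisely a family of $N-2=n-k$ mutually orthogonal Latin squares of order $d$: parametrizing codewords by their first two (information) coordinates $(x,y)$, each further coordinate is a map $L(x,y)$ that is a Latin square by the $\{1,j\}$- and $\{2,j\}$-bijections, while two such coordinates give orthogonal squares by the $\{i,j\}$-bijection. Invoking the classical bound that there are at most $d-1$ mutually orthogonal Latin squares of order $d$ gives $n-k\le d-1$, i.e.\ $d\ge n-k+1=\delta$, which is the claim. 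This argument is uniform in $d$; in particular the small-alphabet case needs no separate treatment, since it merely makes the inequality $n-k\le d-1$ unsatisfiable.

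The main obstacle is the length bound itself, the step $n\le d+k-1$, and the one subtlety to watch is that $\CC$ need not be linear. The shortening reduction and the Latin-square identification above are valid for arbitrary (nonlinear) codes, so the only external input is the maximum-MOLS bound (equivalently, Bush's bound for index-$1$ orthogonal arrays of strength $k$), which indeed holds without any linearity assumption. Everything else --- the passage to codes, the computation $k=\floor{n/2}$, and the equivalence of $d\ge\delta$ with the length bound --- is bookkeeping.
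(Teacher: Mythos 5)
Your proof is correct, but note that the paper itself offers no proof of this statement to compare against: Theorem~\ref{necessarycondition} is imported verbatim from \cite{amecombinatorial}, so your argument is a genuine addition rather than a variant of an in-paper proof. The route you take --- pass to the associated MDS code via Theorem~\ref{mdsequivalence}, verify the strength-$k$/index-$1$ projection property, shorten down to dimension $2$, identify the resulting code with $n-k$ mutually orthogonal Latin squares of order $d$, and invoke the classical bound of $d-1$ MOLS --- is sound at every step. In particular, the two points that usually cause trouble are handled correctly: shortening a possibly nonlinear MDS code does yield an $(n-1,\,d^{k-1},\,\delta)$ MDS code (injectivity of the $k$-coordinate projection gives the size count, and the Singleton bound pins the minimum distance back down to $\delta$), and the MOLS bound requires no linearity or prime-power hypothesis, so the conclusion holds for arbitrary integer $d$, exactly as the paper requires (it stresses that the results of section~\ref{amegeneral} hold for general $d$). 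This is essentially the standard combinatorial argument underlying the cited result in \cite{amecombinatorial}, phrased in the orthogonal-array/Latin-square language; what it buys here is a self-contained derivation of $n\le d+k-1$ for unrestricted MDS codes, which the paper's later sections only have available in the linear, prime-power setting through the bound $L_d(k)$ of \cite{roth}.
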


This condition forbids many combinations $(n,d)$ for possible $AME(n,d)$ states of minimal support. For example, although $AME(6,2)$ states exist, none of them can be of minimal support, \cite{amecombinatorial}.

Theorem \ref{necessarycondition} can be read as an upper bound for $\NN(d)$.
\begin{corollary}
For any integer $d\ge 3$, $\NN(d)\le 2d-2$, if $\NN(d)$ is even, and $\NN(d)\le 2d-3$, if $\NN(d)$ is odd.
\end{corollary}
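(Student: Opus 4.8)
The plan is to apply Theorem~\ref{necessarycondition} to the extremal word length $n=\NN(d)$ and then simply unpack the ceiling according to the parity of $n$. First I would invoke the previous corollary, which guarantees that $\NN(d)$ is well defined and that an $AME(\NN(d),d)$ state of minimal support actually exists. Setting $n=\NN(d)$ and assuming for the moment that $n\ge 4$, Theorem~\ref{necessarycondition} gives $d\ge\ceil{n/2}+1$, that is, $\ceil{n/2}\le d-1$. Everything then reduces to evaluating $\ceil{n/2}$ in the two parity cases.

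In the even case, write $\NN(d)=n$ with $n$ even, so $\ceil{n/2}=n/2$ and the inequality $n/2\le d-1$ rearranges to $\NN(d)\le 2d-2$. In the odd case, $n$ is odd and $\ceil{n/2}=(n+1)/2$, so $(n+1)/2\le d-1$ rearranges to $n\le 2d-3$, i.e.\ $\NN(d)\le 2d-3$. These two computations are the whole content of the statement; no coding-theoretic input beyond Theorem~\ref{necessarycondition} is needed.

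The one point requiring a small amount of care is that Theorem~\ref{necessarycondition} only applies when $n\ge 4$, so the degenerate range $\NN(d)\in\{2,3\}$ must be disposed of separately. Here I would observe that the claimed bounds hold trivially from $d\ge 3$: in the even subcase $\NN(d)=2\le 2d-2$ since $2d-2\ge 4$, and in the odd subcase $\NN(d)=3\le 2d-3$ since $2d-3\ge 3$. Thus the conclusion is valid regardless of whether Theorem~\ref{necessarycondition} is directly applicable, and the two cases together cover every possibility.

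Since this corollary is a direct arithmetic consequence of the necessary condition, I do not anticipate a genuine obstacle; the only thing to watch is the interaction between the ceiling function and the parity split, and the boundary range $n<4$ where the hypothesis of Theorem~\ref{necessarycondition} fails. Keeping those two bookkeeping points straight is all that is required to complete the argument.
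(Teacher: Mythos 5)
Your proposal is correct and follows essentially the same route as the paper: apply Theorem~\ref{necessarycondition} to $n=\NN(d)$ (which is legitimate because an $AME(\NN(d),d)$ state of minimal support exists), unpack $\ceil{\NN(d)/2}+1\le d$ according to parity, and dispose of the degenerate range $\NN(d)<4$ by noting the bounds hold trivially when $d\ge 3$. The only cosmetic difference is that the paper phrases the degenerate case as an extension of the theorem's inequality to all $n\ge 2$, whereas you verify the final bounds directly for $\NN(d)\in\{2,3\}$; the content is identical.
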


\begin{proof}
We observe that theorem \ref{necessarycondition} is true when $d\ge 3$, for any $n\ge 2$, the cases not covered in theorem \ref{necessarycondition} being trivial. Since $AME(\NN(d),d)$ states of minimal support exist, the statement is another way to write the inequality $\ceil{\NN(d)/2}+1\le d$.\end{proof}

The results discused so far are true for general integer values of the local dimension $d$.

\section{Using linear MDS codes}\label{linearcodes}

In the case where $d$ is a prime power, the alphabet $\{0,\ldots,d-1\}$ can be given a unique field structure, $GF(d)$. In this case, there is more detailed information on certain cases.

In the case of linear $MDS[n,k]$ codes over the field $GF(d)$, where $n$ stands for the code lenght and $k$ is the code dimension, the Singleton identity reads
\[
k=n-\delta + 1,
\]
where $\delta$ is the minimum distance. The linear MDS codes that give rise to $AME(n,d)$ states of minimal support have, according to the Singleton identity and theorem \ref{mdsequivalence}, dimension $k=\floor{n/2}$.

When $d$ is the power of a prime number, we have the theory of generalized Red Solomon, GRS, codes and their extensions, that are known to be MDS. If $4\le n\le d+1$ and $2\le k\le n-2$, there is linear MDS code of lenght $n$ and dimension $k$ over $GF(d)$, see \cite{roth} for details.

The following result gives many examples of $AME$ states of minimal support. It has been stated in \cite{ameexistenceandapplications} resorting to the theory of linear MDS codes, as referred to above, and in \cite{ameorthogonal} using the theory of orthogonal arrays\footnote{Due to a typographical error, the result is stated in \cite{ameorthogonal} for a general integer dimension $d$. The authors ment to state it in the case where $d$ is a prime power.}.
\begin{theorem}[{\cite{ameexistenceandapplications, ameorthogonal}}]\label{someame}
There are $AME(n,d)$ states of minimal support, whenever $n\ge 4$ and $d\ge n-1$ is a power of a prime number.
\end{theorem}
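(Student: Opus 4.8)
The plan is to prove Theorem \ref{someame} by combining the MDS-existence facts recalled just above with the equivalence of Theorem \ref{mdsequivalence}. By that equivalence, producing an $AME(n,d)$ state of minimal support is the same as producing an MDS code over an alphabet of size $d$, of wordlength $n$, and minimum distance $\delta=\ceil{n/2}+1$. Since $d$ is a prime power, I may work with $GF(d)$ and use linear MDS codes, whose minimum distance $\delta$ and dimension $k$ are tied by the Singleton identity $k=n-\delta+1$. So the target minimum distance $\delta=\ceil{n/2}+1$ corresponds exactly to dimension $k=n-\ceil{n/2}-1+1=n-\ceil{n/2}=\floor{n/2}$, which is the dimension already flagged in the text as the relevant one.

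First I would record the reduction: it suffices to exhibit, for every $n\ge4$ and every prime power $d\ge n-1$, a linear MDS code over $GF(d)$ of length $n$ and dimension $k=\floor{n/2}$. Second, I would invoke the GRS-existence statement quoted above: if $4\le n\le d+1$ and $2\le k\le n-2$, there is a linear MDS $[n,k]$ code over $GF(d)$. The task is then to check that our parameters $(n,k)=(n,\floor{n/2})$ satisfy those hypotheses under the assumption $d\ge n-1$.

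The verification splits into checking the three inequalities. The length bound $n\le d+1$ is immediate from $d\ge n-1$. For the dimension bound $2\le\floor{n/2}$, note $n\ge4$ gives $\floor{n/2}\ge2$. For the upper dimension bound $\floor{n/2}\le n-2$, observe that $n-2-\floor{n/2}=\ceil{n/2}-2\ge0$ exactly when $\ceil{n/2}\ge2$, i.e. when $n\ge3$, which holds. I would present these as a short computation, being slightly careful with the floor/ceiling parity but noting that none of the inequalities is tight in a way that depends on parity beyond the elementary identity $\floor{n/2}+\ceil{n/2}=n$.

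I do not expect a genuine obstacle here, since the theorem is essentially a bookkeeping corollary of the cited GRS/MDS existence result together with Theorem \ref{mdsequivalence}; the only point requiring care is making sure the dimension $k=\floor{n/2}$ lands inside the admissible window $2\le k\le n-2$ for all $n\ge4$, and that the condition $d\ge n-1$ (rather than the weaker $d\ge\ceil{n/2}+1$ from the necessary condition) is what guarantees the length fits the GRS range $n\le d+1$. The mildest edge case to confirm explicitly is $n=4$, where $k=2=n-2$ sits at the boundary of the window and $d\ge3$ must be a prime power; I would check that this boundary case is included rather than excluded by the cited hypotheses.
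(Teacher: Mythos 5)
Your proposal is correct and follows exactly the route the paper itself indicates for this cited result: reduce via Theorem \ref{mdsequivalence} to a linear MDS $[n,\floor{n/2}]$ code over $GF(d)$, then invoke the GRS existence statement for $4\le n\le d+1$, $2\le k\le n-2$, with the arithmetic check $\floor{n/2}\ge 2$ and $n-\floor{n/2}=\ceil{n/2}\ge 2$ handled as you do. The parameter verification, including the boundary case $n=4$, $k=n-2$, is sound, so there is nothing to add.
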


\begin{corollary}\label{corollary}
If $d$ is a prime power, $d\ge 3$, then $\NN(d)\ge d+1$.
\end{corollary}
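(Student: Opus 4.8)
The plan is to derive the statement directly from Theorem~\ref{someame}, which is the previous named result in the excerpt. Corollary~\ref{corollary} asserts that for a prime power $d\ge3$ we have $\NN(d)\ge d+1$, and by the definition of $\NN(d)$ given in the corollary preceding Theorem~\ref{necessarycondition}, this is exactly the statement that an $AME(n,d)$ state of minimal support exists for every $n\le d+1$. So it suffices to produce such a state for $n=d+1$, since $\NN(d)$ is by definition the largest $n$ for which one exists.

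First I would apply Theorem~\ref{someame} with the choice $n=d+1$. This is legitimate precisely because $d\ge3$ guarantees $n=d+1\ge4$, satisfying the first hypothesis of that theorem, and the second hypothesis $d\ge n-1$ becomes $d\ge(d+1)-1=d$, which holds trivially. Thus Theorem~\ref{someame} produces an $AME(d+1,d)$ state of minimal support. Since such a state exists and $\NN(d)$ is the maximal index admitting one, we conclude $\NN(d)\ge d+1$.

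I do not expect any genuine obstacle here: the corollary is essentially a specialization of Theorem~\ref{someame} to the boundary case $n=d+1$, repackaged through the language of $\NN(d)$. The only points deserving a sentence of care are verifying the two numerical hypotheses of Theorem~\ref{someame} (both of which reduce to the assumption $d\ge3$), and noting explicitly that existence of a single $AME(d+1,d)$ state of minimal support forces $\NN(d)\ge d+1$ by the defining property of $\NN(d)$. One could equally invoke Proposition~\ref{nimpliesnminusone} to remark that the whole interval $4\le n\le d+1$ is then covered, but this is not needed for the bound itself, only for the interval structure already recorded in the text.
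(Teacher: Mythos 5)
Your proof is correct and matches the paper's (implicit) argument exactly: the paper states this corollary without proof as an immediate consequence of Theorem~\ref{someame}, specialized to $n=d+1$, where $d\ge3$ gives $n\ge4$ and $d\ge n-1$ holds with equality. Your added care in checking the hypotheses and invoking the defining property of $\NN(d)$ is exactly what the paper leaves tacit.
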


\section{A negative example}\label{negativeexample}

\begin{theorem}\label{noamecase}
There is no $AME(7,5)$ state of minimal support, $\NN(5)=6$.
\end{theorem}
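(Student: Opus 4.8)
The plan is to translate the statement into the language of classical MDS codes and then rule out the relevant code by combining standard linear coding theory with a classification of nonlinear MDS codes. By Theorem \ref{mdsequivalence}, an $AME(7,5)$ state of minimal support exists if and only if there is an MDS code over the alphabet $\Z_5$ of wordlength $n=7$ and minimum distance $\ceil{7/2}+1=5$. By the Singleton identity such a code has exactly $5^{7-5+1}=5^{3}=125$ words, i.e.\ MDS ``dimension'' $k=3$. Thus the whole problem reduces to proving that no MDS code with parameters wordlength $7$, alphabet size $5$ and minimum distance $5$ exists.

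First I would dispose of the linear case. Since $d=5$ is prime, $\Z_5$ carries the field structure $GF(5)$, and a linear code with the above parameters would be a $[7,3,5]_5$ MDS code. But it is known that a linear MDS code over $GF(5)$ of dimension $3$ has wordlength at most $5+1=6$ (the MDS bound, which is a theorem in this small range; see the references on linear MDS codes cited earlier). Hence no linear $[7,3,5]_5$ code exists. This is exactly where the hypotheses ``$d$ prime'' and $7>d+1$ enter, and it is consistent with Corollary \ref{corollary}, which only guarantees $\NN(5)\ge d+1=6$ by exhibiting linear codes up to length $6$.

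The main obstacle, and the genuinely new ingredient, is that a priori a \emph{nonlinear} MDS code with these parameters might exist even though no linear one does. To close this gap I would invoke the recent classification of nonlinear MDS codes over small alphabets in \cite{kokkalaetal}, whose content is precisely to relate nonlinear MDS codes to linear ones: for the parameters at hand it shows that the existence of any such MDS code forces the existence of a linear MDS code with the same parameters (equivalently, that every such code is equivalent to a linear one). Feeding the linear nonexistence established above into this equivalence yields a contradiction, so no $(7,5,5)$ MDS code exists and therefore no $AME(7,5)$ state of minimal support. I expect the only real difficulty to be checking carefully that the triple $(n,d,\delta)=(7,5,5)$ lies within the scope of the classification in \cite{kokkalaetal} and that it genuinely forces linearity; the linear part is classical and the remaining bookkeeping is routine.

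Finally I would pin down $\NN(5)$. Corollary \ref{corollary} already gives $\NN(5)\ge d+1=6$, so an $AME(6,5)$ state of minimal support exists. On the other hand, Proposition \ref{nimpliesnminusone} shows that nonexistence propagates upward: were an $AME(n,5)$ state of minimal support to exist for some $n\ge 8$, it would force an $AME(7,5)$ state, which we have just excluded. Hence no $AME(n,5)$ state of minimal support exists for $n\ge 7$, giving $\NN(5)\le 6$ and therefore $\NN(5)=6$.
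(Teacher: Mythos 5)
Your proposal is correct and follows essentially the same route as the paper: reduce via Theorem \ref{mdsequivalence} to a $(7,5^3,5)$ MDS code, rule out the linear case using the known bound $L_5(3)=6$ on linear MDS codes over $GF(5)$, and then invoke the result of \cite{kokkalaetal} (which applies since $k=3\ge 3$ and $\delta=5\ge 3$) to transfer the nonexistence to general, possibly nonlinear, codes. The only cosmetic difference is that for $\NN(5)\le 6$ you cite Proposition \ref{nimpliesnminusone} explicitly, while the paper absorbs that monotonicity into the definition of $\NN(d)$; the content is identical.
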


\begin{proof}

As in \cite{roth}, define $L_d(k)$ as the maximum wordlength of any linear MDS code of dimension $k$ over $GF(d)$, $d$ being a prime power. Several bounds an equalities are known about $L_d(k)$, see \cite{roth}. We will use that $L_d(3)=d+1$ if $d$ is an odd prime power. In particular, we use that $L_5(3)=6$.

This shows that no linear MDS code over $GF(5)$ exists with wordlength 7 and dimension 3.

Now suppose that an $AME(7,5)$ state of minimal support exists. By theorem \ref{mdsequivalence}, there is a MDS code over $GF(5)$ with wordlength 7 and minimum distance $5$.

The code given in theorem \ref{mdsequivalence} however, is not guaranteed to be linear, so this bound $L_5(3)=6$ on the theory of linear codes does not suffice to prove the statement.

To end the proof, we note a result of \cite{kokkalaetal}, that any MDS code, not necessarily linear, over an alphabet of size 5, code size $5^k$, $k\ge 3$, and minimum distance $\delta\ge 3$, can be transformed to a linear MDS code with the same parameters and dimension $k$ with a permutation of coordinates, followed by a permutation of the symbols at each coordinate separately.

This proves that no $AME(7,5)$ state of minimal support exists and $\NN(5)\le 6$, corollary \ref{corollary} gives the reverse inequality.
\end{proof}

The necessary condition given in theorem \ref{necessarycondition} does not forbid the existence of $AME(7,5)$ states of minimal support. This necessary condition, therefore, is no sufficient.

\section{Conclusions}\label{conclusions}

The existence problem for $AME(n,d)$ states is a non trivial one, even for states minimally supported.

$AME(n,d)$ states of minimal support exist if, and only if $n\le\NN(d)$, and the necessary and sufficient conditions reviewed in this paper can be read as:
\[
d+1\le \NN(d) \le 2d-2,\text{ or } 2d-3,
\]
for $d\ge 3$, the inequality on the right being valid for any integer $d$ and the one on the left being valid for all $d$ power of a prime number. We have seen that the upper bound for $\NN(d)$ is not tight, since $\NN(5)=6$.

The theory of linear codes is restricted to the case where the local dimension is a prime power. To investigate other local dimensions, further consideration of general (nonlinear) codes and of combinatorial structures, like orthogonal arrays, seems needed. Sharper estimates on the maximum number of sites $\NN(d)$ for which there are $AME$ states of minimal support for a given local dimension $d$ are desirable too.

\end{multicols}

\end{document}